\def\BibTeX{{\rm B\kern-.05em{\sc i\kern-.025em b}\kern-.08em
    T\kern-.1667em\lower.7ex\hbox{E}\kern-.125emX}}
\begin{document}

\bstctlcite{BSTcontrol}

\title{Data Generation Method for Learning a Low-dimensional Safe Region in Safe Reinforcement Learning
}

\author{Zhehua~Zhou\textsuperscript{1},
        Ozgur S.~Oguz\textsuperscript{2},
        Yi Ren\textsuperscript{3},
        Marion~Leibold\textsuperscript{1}
        and~Martin~Buss\textsuperscript{1}
\thanks{\textsuperscript{1}Chair of Automatic Control Engineering, Technical University of Munich, Munich, Germany (e-mail: zhehua.zhou@tum.de; marion.leibold@tum.de; mb@tum.de).}
\thanks{\textsuperscript{2}Max Planck Institute for Intelligent Systems and University of Stuttgart, Stuttgart, Germany (e-mail: ozgur.oguz@ipvs.uni-stuttgart.de).}
\thanks{\textsuperscript{3}Tencent Robotics X Lab, Tencent, Shenzhen, China (e-mail: evanyren@tencent.com).}}


\maketitle

\begin{abstract}
Safe reinforcement learning aims to learn a control policy while ensuring that neither the system nor the environment gets damaged during the learning process.
For implementing safe reinforcement learning on highly nonlinear and high-dimensional dynamical systems, one possible approach is to find a low-dimensional safe region via data-driven feature extraction methods, which provides safety estimates to the learning algorithm.
As the reliability of the learned safety estimates is data-dependent, we investigate in this work how different training data will affect the safe reinforcement learning approach.
By balancing between the learning performance and the risk of being unsafe, a data generation method that combines two sampling methods is proposed to generate representative training data.
The performance of the method is demonstrated with a three-link inverted pendulum example.
\end{abstract}

\begin{IEEEkeywords}
Safe Reinforcement Learning, Data Generation, Data-driven Feature Extraction
\end{IEEEkeywords}

\section{Introduction}
\label{sec.introduction}
Deep reinforcement learning (RL) approaches have demonstrated impressive achievements in various control tasks of dynamical systems, e.g., humanoid control~\cite{peng2017deeploco} or robotic manipulator control~\cite{levine2016end}.
However, most of the RL algorithms are currently applied only in simulations~\cite{duan2016benchmarking}, as during the exploration process for an optimal policy, the system may encounter an unsafe intermediate policy that is harmful to the system itself or to the environment.
Hence, for employing RL approaches on real-world dynamical systems, it is critical to include safety guarantees in the learning process.

Safe reinforcement learning (SRL) in dynamical systems with continuous action space has become a popular topic in recent researches~\cite{garcia2015comprehensive}.
For model-free scenarios, safety is usually achieved via solving a constraint satisfaction problem.
For example, constrained policy optimization~\cite{achiam2017constrained} adds a constraint to the learning process that the expected return of cost functions should be restricted within certain predefined limits.
Alternatively, introducing an additional risk term in the reward function, e.g. risk-sensitive RL~\cite{shen2014risk}, can also increase the safety of RL algorithms.
However, as no exact system model is considered in these approaches, there exists still a high probability that safety conditions are violated, especially in the early phase of the learning process.

When an approximation of the system model is available, more reliable safety guarantees can be realized by combining techniques from model-based nonlinear control with RL approaches.
For example in~\cite{perkins2002lyapunov,chow2018lyapunov}, Lyapunov functions are employed to construct a safe subregion of the state space, referred to as a safe region, such that by limiting the learning process within the safe region, safety conditions will never be violated.
However, finding suitable candidates of Lyapunov functions is challenging if the system dynamics is highly nonlinear and high-dimensional.
Besides, robust model predictive control can also be used to provide safety and stability guarantees to RL algorithms~\cite{ostafew2016robust,zanon2020safe}. 
However, the performance of such approaches in general highly relies on the accuracy of the used system model.

To relax requirements on the quality of the system model, recent research introduces data-driven methods for designing a SRL approach that is based on probabilistic safety estimates.
For example in~\cite{fisac2018general}, safety in learning is modelled as a differential game. 
By approximating unknown external disturbances with Gaussian Process models, a probabilistic safe region is computed via reachability analysis.
Similarly in~\cite{berkenkamp2016safe,berkenkamp2017safe}, a safe region is represented by the region of attraction (RoA) and is estimated through modeling the unknown part of system dynamics with a Gaussian Process model.
The exploration of RL algorithms is restricted in such a forward invariant safe region, such that safety is preserved as long as a corrective controller is applied when the system approaches the boundary of the safe region.
However, finding the safe region becomes difficult when the system dynamics is highly nonlinear and high-dimensional (referred to as complex dynamical systems), as in this case performing the reachability analysis or estimating the RoA via sampling are both computationally infeasible~\cite{fisac2019bridging}.

In order to implement SRL on complex dynamical systems, which are often the systems of interest for applying deep RL algorithms, we propose a SRL framework that is based on finding a low-dimensional representation of the safe region~\cite{zhou2020general}.
For each high-dimensional system state, a low-dimensional corresponding state is computed and is considered as the safety feature that predicts the safety of the system.
Then these low-dimensional states are used to construct a reduced-order safe region that approximates the original high-dimensional safe region in a probabilistic form.
However, determining a reliable low-dimensional representation of the safe region is challenging, especially when a thorough understanding about system dynamics is lacking.
To overcome this problem, we introduce a data-driven feature extraction method in~\cite{zhou2020learning} to construct a well-performed low-dimensional safe region.
It assumes that, although the exact full system dynamics might be unknown, a nominal system model that provides at least rough estimates about the system behaviour is available.
Then by collecting data about safety of different system states from the nominal system, a low-dimensional safety feature is derived via learning the probabilistic similarities between training data points.
The mismatch between the nominal and the real systems is later compensated through an online adaptation method.
However, as the learned safety feature is data-dependent, how to generate training data points that are most useful to the SRL framework is an important problem to be solved.

The contribution of this work is two-fold. 
First, we propose a data generation method that is able to generate representative training data points by considering their potential influence on the performance of the SRL framework.
The low-dimensional representation of the safe region learned from those data points achieves a satisfying balance between exploring the state space for finding an optimal policy and keeping the system safe.
Second, we investigate how different training data points will affect the reliability of the derived safety estimates.
Taking the used SRL framework as an example, we provide an insight about what could be a useful way to generate training data points when any other data-driven method is employed to predict the safety of a dynamical system.


\section{Preliminaries}
\label{sec.SRL}
In this section, we outline a SRL framework for complex dynamical systems that is proposed in~\cite{zhou2020general,zhou2020learning}.
The framework relies on the construction of a low-dimensional representation of the safe region, i.e., a subregion of the state space from where the system can be controlled back to a safe state, for estimating the safety of different system states.

\subsection{SRL based on RoA}
We consider the nonlinear control affine dynamical system with partially unknown dynamics as
\begin{equation}
    \dot{x} = f(x) + g(x)u + d(x)
\label{eq.system_real}
\end{equation}
where $x \in \mathcal{X} \subseteq \mathbb{R}^n$ is the $n$-dimensional system state and $u \in \mathcal{U} \subseteq \mathbb{R}^m$ is the $m$-dimensional control input to the system.
$d(x)$ represents the unknown part of the system dynamics.
For brevity, we refer to this system as the \textit{real} system in this work.

Similar as in~\cite{berkenkamp2016safe}, we assume that the origin of the real system~\eqref{eq.system_real} is a known safe state and is locally asymptotically stable under a given corrective controller $u = K(x)$.
A system state $x$ is said to be a safe state if the system can be controlled back to the origin by using the corrective controller $K(x)$ when starting from this state.
According to this, a safe region is defined by using the RoA $\mathcal{R}$ of the origin with respect to the corrective controller $K(x)$ as follows.

\newtheorem{definition}{Definition}
\begin{definition}
\label{def.safe_region}
A safe region $\mathcal{S}$ is a closed positive invariant subset of the RoA $\mathcal{R}$.
\end{definition}

As long as the system state $x$ is inside the safe region $\mathcal{S}$, the system can always be controlled back to a safe state, i.e., the origin, by applying the corrective controller $K(x)$.
Therefore, we categorize system states into safe (safety label $z=1$) and unsafe ($z=0$) classes with a labeling function $l(x):\mathcal{X} \rightarrow \mathcal{Z} = \{1,0\}$ as
\begin{equation}
l(x) = z = \begin{cases} 
    1, & \text{if } x \in \mathcal{S} \backslash \{ \partial \mathcal{S} \} \\
    0, & \text{else} 
\end{cases} 
\label{eq.label_real}
\end{equation}
where $\mathcal{S} \backslash \{ \partial \mathcal{S} \}$ is the interior of the safe region $\mathcal{S}$ that excludes the boundary $\partial \mathcal{S}$.
Apparently, if the safe region $\mathcal{S}$ is known, then an ideal SRL framework can be designed based on a supervisory control strategy that switches between a learning-based controller $\pi(x)$ and the corrective controller $K(x)$ as
\begin{equation}
    u = \begin{cases} 
    \pi(x), & \text{if } t < t^{*} \\
    K(x), & \text{else} 
\end{cases} 
\label{eq.original_supervisor}
\end{equation}
where $t^{*}$ is the first time that the system state $x$ is considered as unsafe, i.e., $l(x) = 0$.
In each learning trial, the system starts inside the safe region $\mathcal{S}$ with time $t = 0$ and the learning-based controller $\pi(x)$ is first applied.
For keeping the system safe, the supervisor~\eqref{eq.original_supervisor} activates the corrective controller $K(x)$ at time $t = t^{*}$.
After the safety recovery, the learning environment is reset and the next learning trial starts again with time $t = 0$.
However, in general it is computationally infeasible to calculate the safe region $\mathcal{S}$ directly for a complex dynamical system~\cite{ahmadi2019dsos,fisac2019bridging}.


\subsection{SRL with a Low-dimensional Representation of the Safe Region}

For implementing the SRL framework on complex dynamical systems, an approach that is based on estimating the safety with a low-dimensional representation of the safe region is introduced in~\cite{zhou2020general}.

By mapping each system state $x$ to a low-dimensional simplified state, i.e., the safety feature, $y \in \mathcal{Y} \subseteq \mathbb{R}^{n_y}, n_y \ll n$ with a state mapping $y = \Psi(x)$, the safety of system state $x$ is estimated through safety of the corresponding simplified state $y$ in a probabilistic form as
\begin{equation}
    \mathbb{P}( l(x)  = 1) = \Gamma(y)|_{y = \Psi(x)} \sim [0,1]
\label{eq.safety_gamma}
\end{equation}
where $\Gamma(y)$ is a safety assessment function defined over the simplified state space $\mathcal{Y}$. 
In~\cite{zhou2020general}, the state mapping $y = \Psi(x)$ and the safety assessment function $\Gamma(y)$ is determined by using a physically inspired model order reduction technique.

By using a predefined probability threshold $p_t$, a low-dimensional representation of the safe region, denoted as the simplified safe region $\mathcal{S}_y$, is thus given in the simplified state space $\mathcal{Y}$ as
\begin{equation}
\mathcal{S}_y = \{ y \in \mathcal{Y} \hspace{1mm} | \hspace{1mm} \Gamma(y) > p_t \}
\label{eq.safe_region_low}
\end{equation}
which approximates the high-dimensional safe region $\mathcal{S}$.
It leads to a SRL framework for complex dynamical systems that is based on the following modified supervisor 
\begin{equation}
     u = \begin{cases} 
    \pi(x), & \text{if } t < t^{'} \\
    K(x), & \text{else }
\end{cases} 
\label{eq.prob_supervisor}
\end{equation}
where $t^{'}$ is the first time point that the system state $x$ is predicted to be unsafe, i.e., $\Psi(x) = y \notin \mathcal{S}_y$.

\subsection{Data-driven Feature Extraction}

To overcome the limitation of physically inspired model order reduction, we employ a data-driven feature extraction method for identifying the simplified safe region $\mathcal{S}_y$ in~\cite{zhou2020learning}.
It assumes that, the available knowledge about the system dynamics formulates a \textit{nominal} system 
\begin{equation}
    \dot{x} = f(x) + g(x)u
\label{eq.system_nominal}
\end{equation}
Due to the highly nonlinear and high-dimensional dynamics, calculating the safe region of the nominal system, denoted as $\mathcal{S}_n$, is still computationally infeasible.
However, the safety of each individual system state $x$ of the nominal system can be examined directly by simulating the nominal system with respect to the corrective controller $K(x)$.
Hence, a dataset that reflects the safe region of the nominal system $\mathcal{S}_n$ is obtainable.
By expecting that the behavior of the nominal system will at least provide a prediction about the behaviour of the real system, a data-driven method is thus implemented to derive an initial estimate of the simplified safe region $\mathcal{S}_y$ for the real system.

We refer to the dataset obtained from the nominal system as the training dataset $\mathbb{D}_{\mathrm{tr}}$ with $|\mathbb{D}_{\mathrm{tr}}|= k$ data points. 
Each data point contains both the system state $x$ at which the corrective controller is initially activated and the corresponding safety label of this state.
By comparing the pairwise similarities between training data points, a method called t-Distributed Stochastic Neighbor Embedding (t-SNE)~\cite{maaten2008visualizing} is adopted to compute a realization of simplified states $\{y_1,\ldots, y_k\}$ that best represents the training dataset $\mathbb{D}_{\mathrm{tr}}$. 
These simplified states are then used to approximate the state mapping $y = \Psi(x)$ and the safety assessment function $\Gamma(y)$ for calculating an initial estimate of the simplified safe region $\mathcal{S}_y$.
See~\cite{zhou2020learning} for more details about learning with t-SNE.

\section{Training Data Generation Method}
\label{sec.data_generation}

The estimate of the simplified safe region $\mathcal{S}_y$ obtained from the aforementioned data-driven method provides a hypothesis $h(x):\mathcal{X}\rightarrow\mathcal{Z}$ for predicting the safety label of different real system states as
\begin{equation}
h(x) = z = \begin{cases} 
    1, & \text{if }  \Psi(x) = y \in \mathcal{S}_y \\
    0, & \text{if }  \Psi(x) = y \notin \mathcal{S}_y 
\end{cases} 
\label{eq.label_hypothesis}
\end{equation}
The reliability of the hypothesis $h(x)$ depends, on the one hand, on the magnitude of discrepancy between the nominal and the real systems. 
On the other hand, as the state mapping $y = \Psi(x)$ and the safety assessment function $\Gamma(y)$ are derived from the training dataset $\mathbb{D}_{\mathrm{tr}}$, the quality of training data points also affects the performance of the hypothesis $h(x)$.
Therefore, we first investigate in this section the influence of the choice of training data on the classification error and the performance of SRL framework.
Then we propose a data generation method that combines a uniform distribution and a multivariate normal distribution to generate a training dataset $\mathbb{D}_{\mathrm{tr}}$ that is most useful to the SRL framework.

\subsection{Training Data vs. Classification Error}

Predicting safety of real system states can be treated as a binary classification problem. 
Hence, we consider the classification error as the first criterion when generating the training dataset $\mathbb{D}_{\mathrm{tr}}$.

We assume that during the learning, all visited system states of the real system are drawn from an unknown distribution $\mathcal{D}$.
Meanwhile, the distribution of the nominal system states contained in the training dataset $\mathbb{D}_{\mathrm{tr}}$ is denoted as $\mathcal{D}_n$.
Then for the learned hypothesis $h(x)$, its classification error on the real system $\epsilon (h, l)$ (referred to as the generalization error) and on the nominal system $\epsilon_n (h, l_n)$ (referred to as the source error) are
\begin{align}
\epsilon (h, l) = \mathrm{E}_{x\sim \mathcal{D}} \left[ \mathbb{I} (h(x) \neq l(x)) \right]
\label{eq.classification_error_real}
\end{align}
\begin{align}
\epsilon_n (h, l_n) = \mathrm{E}_{x\sim \mathcal{D}_n} \left[ \mathbb{I} (h(x) \neq l_n(x)) \right]
\end{align}
which represent the probability that according to the distribution $\mathcal{D}$ or $\mathcal{D}_n$, the hypothesis $h(x)$ disagrees with the labeling function $l(x)$ given by the safe region $\mathcal{S}$, or the labeling function $l_n(x)$ given by the safe region of the nominal system $\mathcal{S}_n$, respectively.
By extending Theorem 1 given in~\cite{ben2010theory} based on the $\mathcal{H}\Delta\mathcal{H}$-divergence, the following theorem that bounds the generalization error holds for the hypothesis $h(x)$.

\newtheorem{theorem}{Theorem}
\begin{theorem}
\label{theo.classification_erro} 
The generalization error $\epsilon (h, l)$ of hypothesis $h(x)$ satisfies
\begin{equation}
    \epsilon (h, l) \leq  \epsilon_n (h, l_n) + \frac{1}{2}d_{\mathcal{H}\Delta \mathcal{H}}(\mathcal{D},\mathcal{D}_n) + \min\{\mathrm{E}_1, \mathrm{E}_2\}
\label{eq.theorem_error}
\end{equation}
where $d_{\mathcal{H}\Delta \mathcal{H}}$ is the $\mathcal{H}\Delta\mathcal{H}$-distance and we have $\mathrm{E}_1 = \mathrm{E}_{x\sim \mathcal{D}_n} \left[ \mathbb{I} (l(x) \neq l_n(x)) \right]$ , $\mathrm{E}_2 = \mathrm{E}_{x\sim \mathcal{D}} \left[ \mathbb{I} (l(x) \neq l_n(x)) \right]$.
\end{theorem}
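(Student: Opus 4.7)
The plan is to derive the stated inequality as an adaptation of Theorem~1 of Ben-David et al., using two ingredients: (i) the triangle inequality for the indicator-based disagreement $\epsilon(f,g) = \mathrm{E}_{x\sim\mu}[\mathbb{I}(f(x)\neq g(x))]$, which is exact because $\mathbb{I}(\cdot)$ is a genuine pseudo-metric on hypotheses, and (ii) the standard $\mathcal{H}\Delta\mathcal{H}$-divergence lemma, which asserts that for every pair $(h_1,h_2)\in\mathcal{H}\times\mathcal{H}$,
\begin{equation*}
\bigl|\epsilon(h_1,h_2) - \epsilon_n(h_1,h_2)\bigr| \leq \tfrac{1}{2}\,d_{\mathcal{H}\Delta\mathcal{H}}(\mathcal{D},\mathcal{D}_n).
\end{equation*}
The twist beyond the original theorem is that the source and target domains possess \emph{different} labeling functions $l_n$ and $l$, which forces me to route the bound through both distributions in two complementary ways and then take the smaller of the two resulting bounds.

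First I would derive the candidate bound containing $\mathrm{E}_2$. Under the real distribution $\mathcal{D}$, the triangle inequality with $l_n$ as the pivot gives $\epsilon(h,l)\leq \epsilon(h,l_n)+\mathrm{E}_2$. Applying the $\mathcal{H}\Delta\mathcal{H}$-divergence lemma to the pair $(h,l_n)$ then produces $\epsilon(h,l_n)\leq \epsilon_n(h,l_n)+\tfrac{1}{2}\,d_{\mathcal{H}\Delta\mathcal{H}}(\mathcal{D},\mathcal{D}_n)$, and chaining the two steps yields the first version of the bound. Next, I would derive the candidate bound containing $\mathrm{E}_1$ by swapping the order of operations: first transport to the nominal distribution via the divergence lemma applied to the pair $(h,l)$, giving $\epsilon(h,l)\leq \epsilon_n(h,l)+\tfrac{1}{2}\,d_{\mathcal{H}\Delta\mathcal{H}}(\mathcal{D},\mathcal{D}_n)$, and then use the triangle inequality under $\mathcal{D}_n$ with $l_n$ as the pivot to obtain $\epsilon_n(h,l)\leq \epsilon_n(h,l_n)+\mathrm{E}_1$. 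Since both candidate bounds hold simultaneously, the sharper one is their minimum, which is exactly the claimed inequality.

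The main obstacle I expect is ensuring that the divergence lemma is genuinely applicable to the pairs $(h,l)$ and $(h,l_n)$: the standard statement requires both arguments to lie in $\mathcal{H}$, whereas $l$ and $l_n$ are ground-truth labelings that may not belong to the hypothesis class. The cleanest remedy, paralleling Ben-David et al.'s treatment, is either to state explicitly that $l,l_n\in\mathcal{H}$, or to replace them by their closest members in $\mathcal{H}$ and absorb the resulting slack into $\mathrm{E}_1$ and $\mathrm{E}_2$ (so that they play the role of the ideal-hypothesis term $\lambda$ in the original bound). Once this subtlety is pinned down, the remainder of the argument is a mechanical chaining of two exact inequalities in each of the two paths, and the theorem follows.
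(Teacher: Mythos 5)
Your proposal is correct and follows essentially the same route as the paper: the paper's add-and-subtract decomposition is exactly your two-step chaining (triangle inequality for the disagreement pseudo-metric in one domain, plus Lemma~3 of Ben-David et al.\ to cross distributions), carried out in both orders to obtain the two candidate bounds whose minimum is the stated result. The subtlety you flag about applying the divergence lemma to pairs involving $l$ or $l_n$ is handled in the paper simply by ``considering the labeling function as a hypothesis,'' i.e.\ the same assumption you propose making explicit.
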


\begin{proof}
See Appendix B. 
\end{proof}

The upper bound of the generalization error given in~\eqref{eq.theorem_error} contains three terms: the first term is the source error; the second term represents the divergence in distributions; the third term is the difference in labeling functions and cannot be changed, as it is affected only by the discrepancy between the nominal and the real systems.
Hence for achieving a low generalization error, Theorem~\ref{theo.classification_erro} suggests to move closer the two distributions $\mathcal{D}$ and $\mathcal{D}_n$ while keeping the source error small.
Based on this, it is motivated to generate the training dataset $\mathbb{D}_{\mathrm{tr}}$ by using an accurate estimate of the unknown distribution $\mathcal{D}$.

In~\cite{zhou2020learning}, the training dataset $\mathbb{D}_{\mathrm{tr}}$ is generated by sampling system states with a uniform distribution (UD) $\mathcal{D}_{\mathrm{ud}}$ among the entire state space.
However when controlling a dynamical system, the probability that a system state $x$ will be visited is affected by the system dynamics (see Section~\ref{sec.result_initial} and in particular Fig.~\ref{fig.data_gaussian} for an example).    
Hence, in general the UD $\mathcal{D}_{\mathrm{ud}}$ is expected not to be close to the real distribution $\mathcal{D}$.

Although the distribution $\mathcal{D}$ is unknown prior to the learning process on the real system, it can be approximated by simulating the nominal system.
To do this, we first set the initial state of the nominal system as the origin.  
Then, we control the nominal system with a random policy and record all system states observed in the system trajectory. 
Repeating this multiple times results in a dataset $\mathbb{X}$ of system states that reflects the probability that different states will be visited during control.
A multivariate normal distribution (MND) $\mathcal{D}_{\mathrm{mnd}}(\mu, \Sigma)$ is then fitted to the dataset $\mathbb{X}$ and is considered as an approximation of the distribution $\mathcal{D}$. 
Apparently, the accuracy of such an approximation is affected by the magnitude of discrepancy between the nominal and the real systems.

\subsection{Classification Error and SRL}

The motivation of using the MND $\mathcal{D}_{\mathrm{mnd}}$ for generating training data points is from the prospective of reducing the generalization error.
However, if the data generation is decided only based on the generalization error, the performance of the SRL framework might be affected due to the following reason.

The generalization error consists of two parts
\begin{eqnarray}
\epsilon (h, l) &=& \mathrm{E}_{x\sim \mathcal{D}} \left[ \mathbb{I} (h(x) = 1, l(x) = 0) \right] \nonumber \\
&+& \mathrm{E}_{x\sim \mathcal{D}} \left[ \mathbb{I} (h(x) = 0, l(x) = 1) \right]
\label{eq.classification_error_decom}
\end{eqnarray}
While the first error type (false positive) will cause unsafe behaviours of the dynamical system, the second error type (false negative) only means conservativeness in the SRL process.
The purpose of SRL is to find a satisfying learning-based policy $\pi(x)$ while keeping a high probability that the system is safe.
Hence, conservativeness is acceptable as long as a well-performed policy can be learned within the subregion of state space restricted by the supervisor.
In that regard, considering only the generalization error in the training data generation is likely to deteriorate the performance of the SRL framework, as reducing the conservativeness usually also means a higher chance of encountering an unsafe system behavior.

Therefore, for taking the performance of the SRL framework into consideration, we in general would like to keep a certain degree of conservativeness during the learning process.
This can be achieved by reducing our confidence in considering a system state $x$ as safe unless enough evidence is provided.
In that sense, using the UD $\mathcal{D}_{\mathrm{ud}}$ for generating the training dataset $\mathbb{D}_{\mathrm{tr}}$ is helpful.
The reason is that, compared to the MND $\mathcal{D}_{\mathrm{mnd}}$ that has a majority of data points being close to the origin, the training data points are now placed among the entire state space (see Section~\ref{sec.result_initial} for an example). 
Thus the proportion of safe data points is reduced.  
Note that, the underlying principle of using data-driven method for making safety predictions is to use known data points for estimating the safety of unseen data points. 
Therefore, the hypothesis $h(x)$ learned from the UD $\mathcal{D}_{\mathrm{ud}}$ tends to make an unsafe prediction, since it is less likely to find a nearby safe training data point.
As a result, although the UD $\mathcal{D}_{\mathrm{ud}}$ gives a higher generalization error, it preserves the conservativeness in the SRL framework, which then ensures a higher probability that the system is safe during the learning process.

\subsection{Combined Data Generation}
While conservativeness is able to results in a safer learning process, a satisfying policy that completes the given control task might not be found if the RL algorithm is overly restricted. 
Hence for achieving a good balance between the learning performance and the probability of being safe, we propose to divide the training dataset $\mathbb{D}_{\mathrm{tr}}$ into two parts
\begin{equation}
    \mathbb{D}_{\mathrm{tr}} = \mathbb{D}_{\mathrm{ud}} + \mathbb{D}_{\mathrm{mnd}}
\end{equation}
with $|\mathbb{D}_{\mathrm{ud}}| = \alpha k$, $|\mathbb{D}_{\mathrm{mnd}}| = (1 - \alpha) k$ and $0 \leq \alpha \leq 1$.
The sub-datasets $\mathbb{D}_{\mathrm{ud}}$ and $\mathbb{D}_{\mathrm{mnd}}$ are generated by using the UD $\mathcal{D}_{\mathrm{ud}}$ and the MND $\mathcal{D}_{\mathrm{mnd}}$, respectively.
The coefficient $\alpha$ determines the size of sub-datasets as well as the tendency of the SRL framework to perform exploration or to keep the safety.
If the unknown dynamics $d(x)$ is assumed to be small, or failure of the corrective controller $K(x)$ is considered as less critical, it is suggested to use a small value of $\alpha$ for ensuring a satisfying performance of the RL algorithm, where training data points are mostly sampled by using known knowledge about the system trajectories.
On the contrary, if it is more important to avoid unsafe behaviours, then a large value of $\alpha$ should be used to keep the conservativeness in learning, i.e., most training data points are drawn from the UD.


\section{Experimental Results}
\label{sec.result}

In this section, we examine the influence of the combined data generation method on the performance of the SRL framework with a three-link inverted pendulum example.

\begin{figure}[t]
    \sf
    
    \centering
    \includegraphics[width = .55\linewidth]{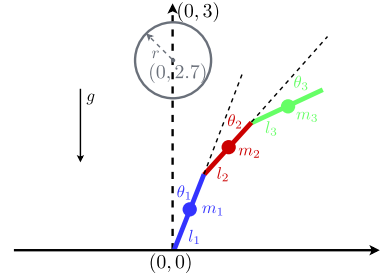}
    \caption{Three-link inverted pendulum with a target circle given in the Cartesian space. The connection point between the pendulum and the ground is the origin of the Cartesian coordinate system. When the pendulum is at the zero configuration, the end-effector point locates at $(0,3)$. The target circle has its centre at $(0,2.7)$ and its radius as $r = 0.3$.}
    \label{fig.inverted_pendulum}
\end{figure}

\begin{figure*}[!t]
\sf
\small
\centering
\subfloat[]{\includegraphics[width=.27\linewidth, height= 34mm]{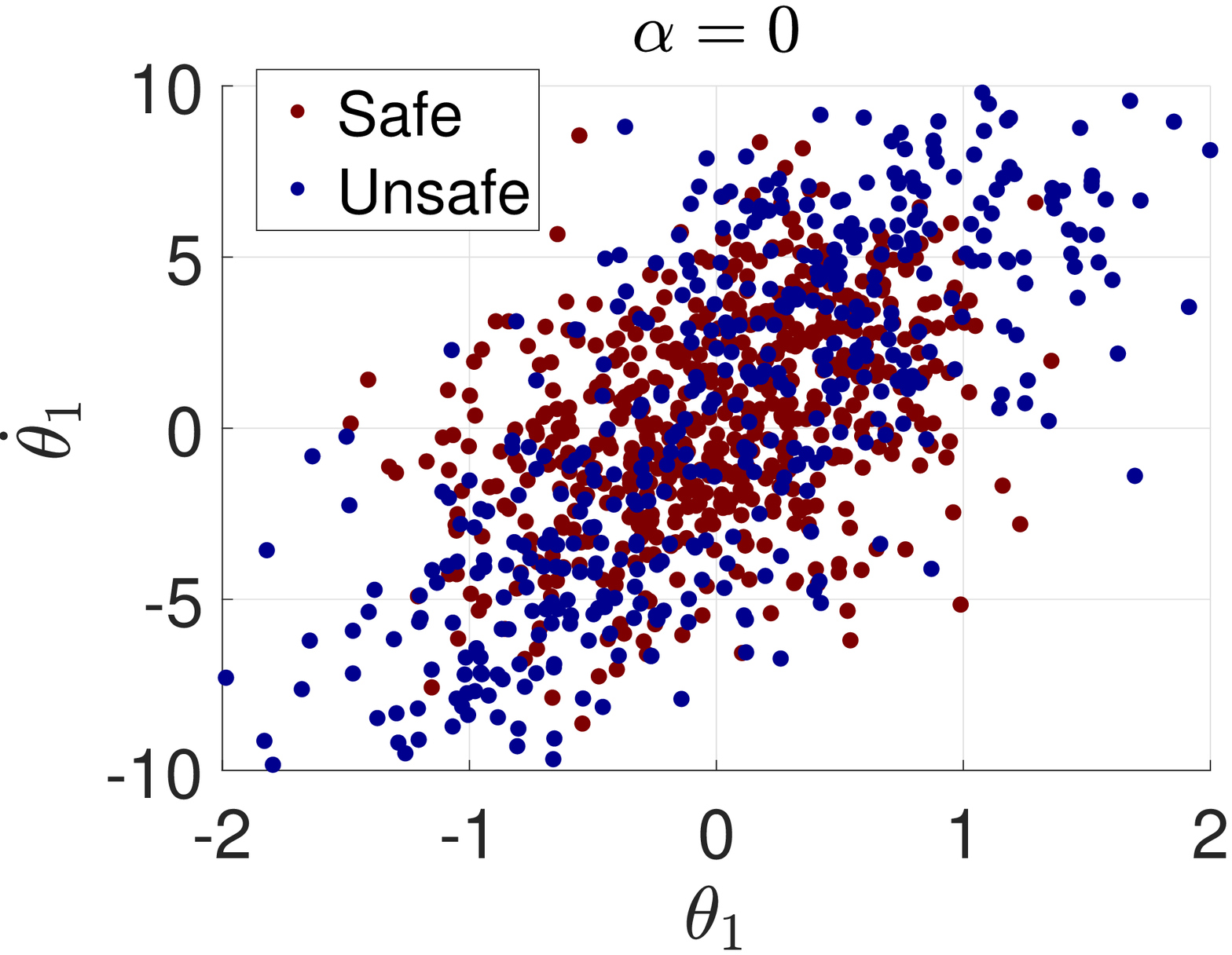}%
\label{fig.data_gaussian}}
\hfil
\subfloat[]{\includegraphics[width=.27\linewidth, height= 34mm]{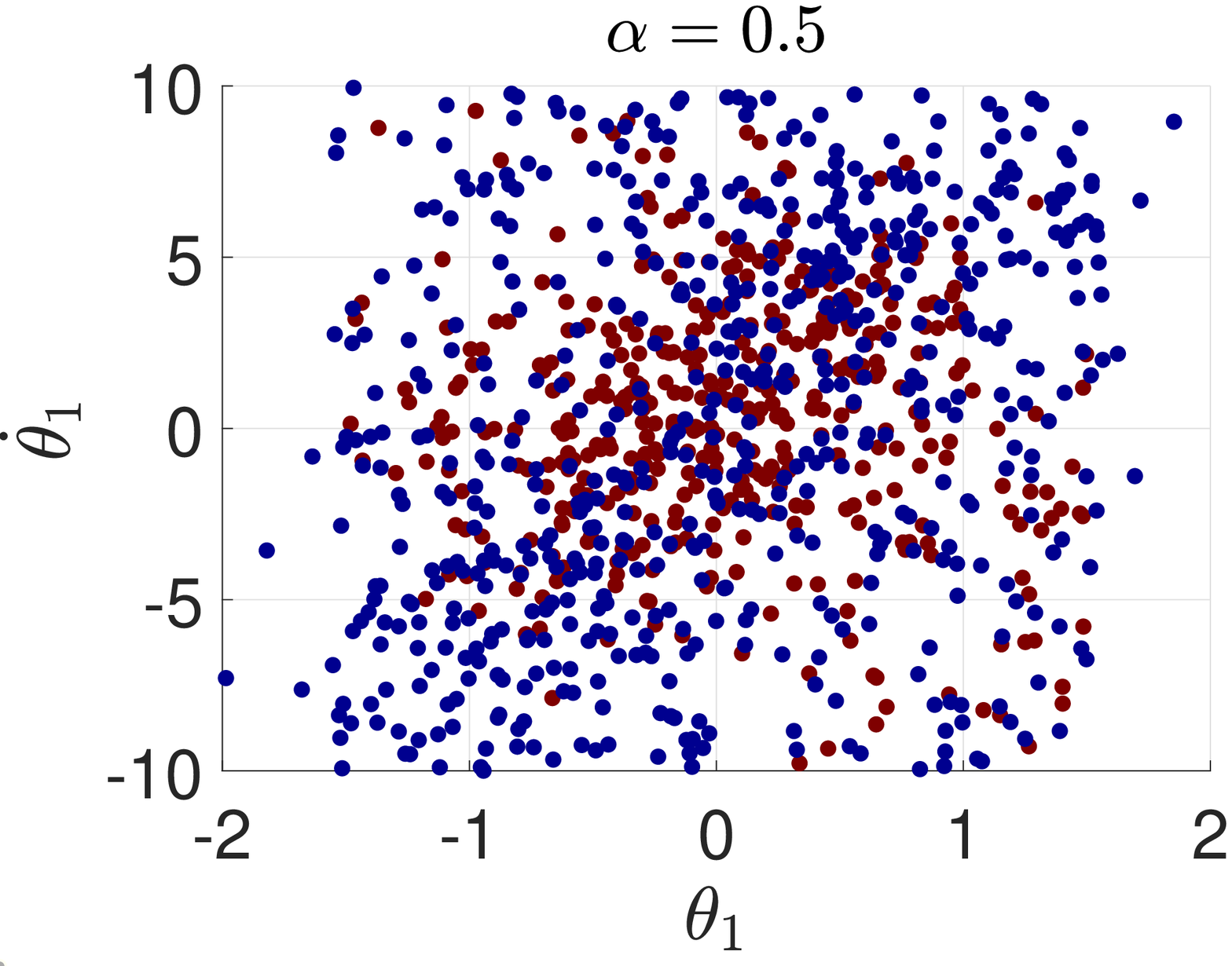}%
\label{fig.data_mix}}
\hfil
\subfloat[]{\includegraphics[width=.27\linewidth, height= 34mm]{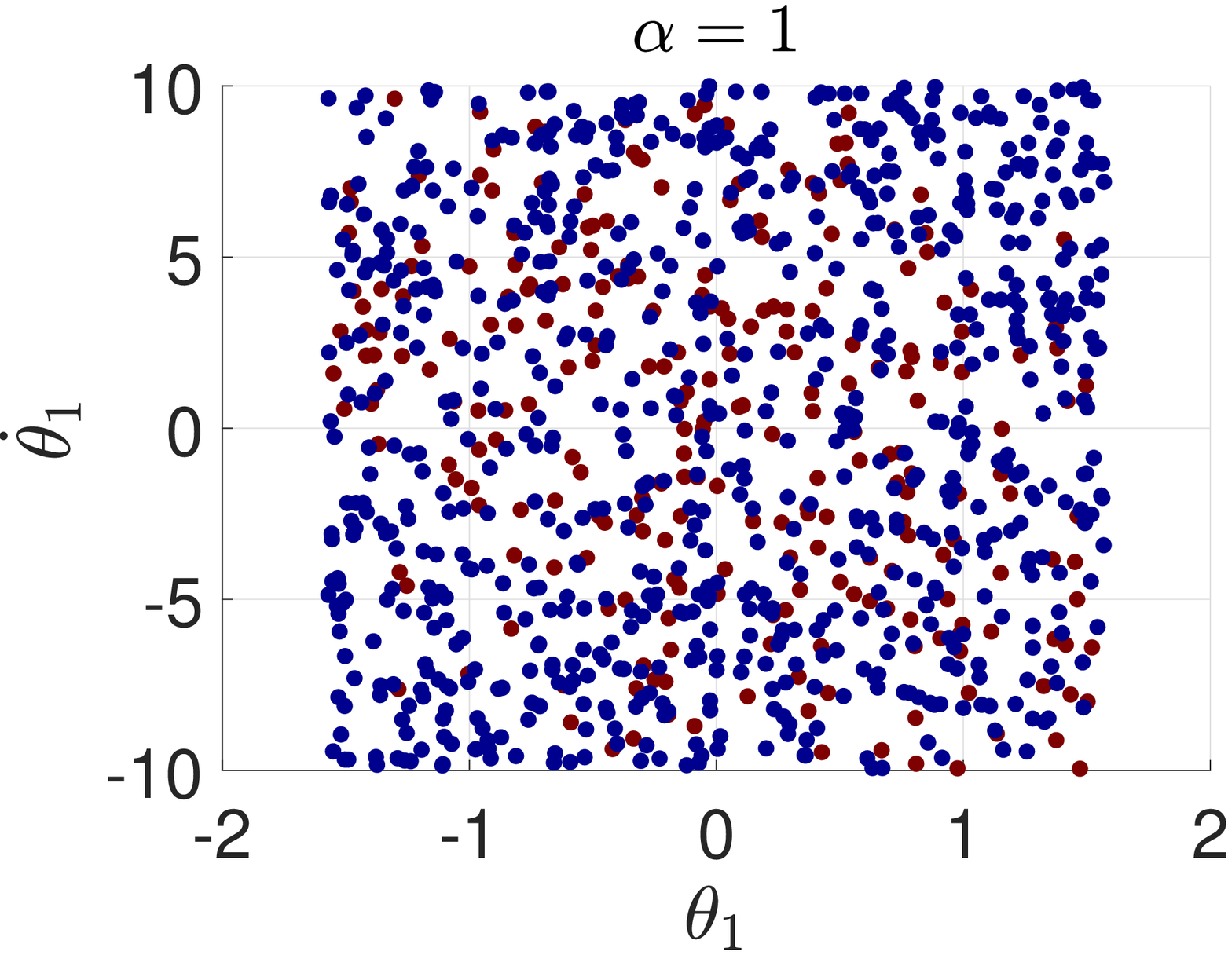}%
\label{fig.data_uniform}}
\hfil
\subfloat[]{\includegraphics[width=.27\linewidth, height= 34mm]{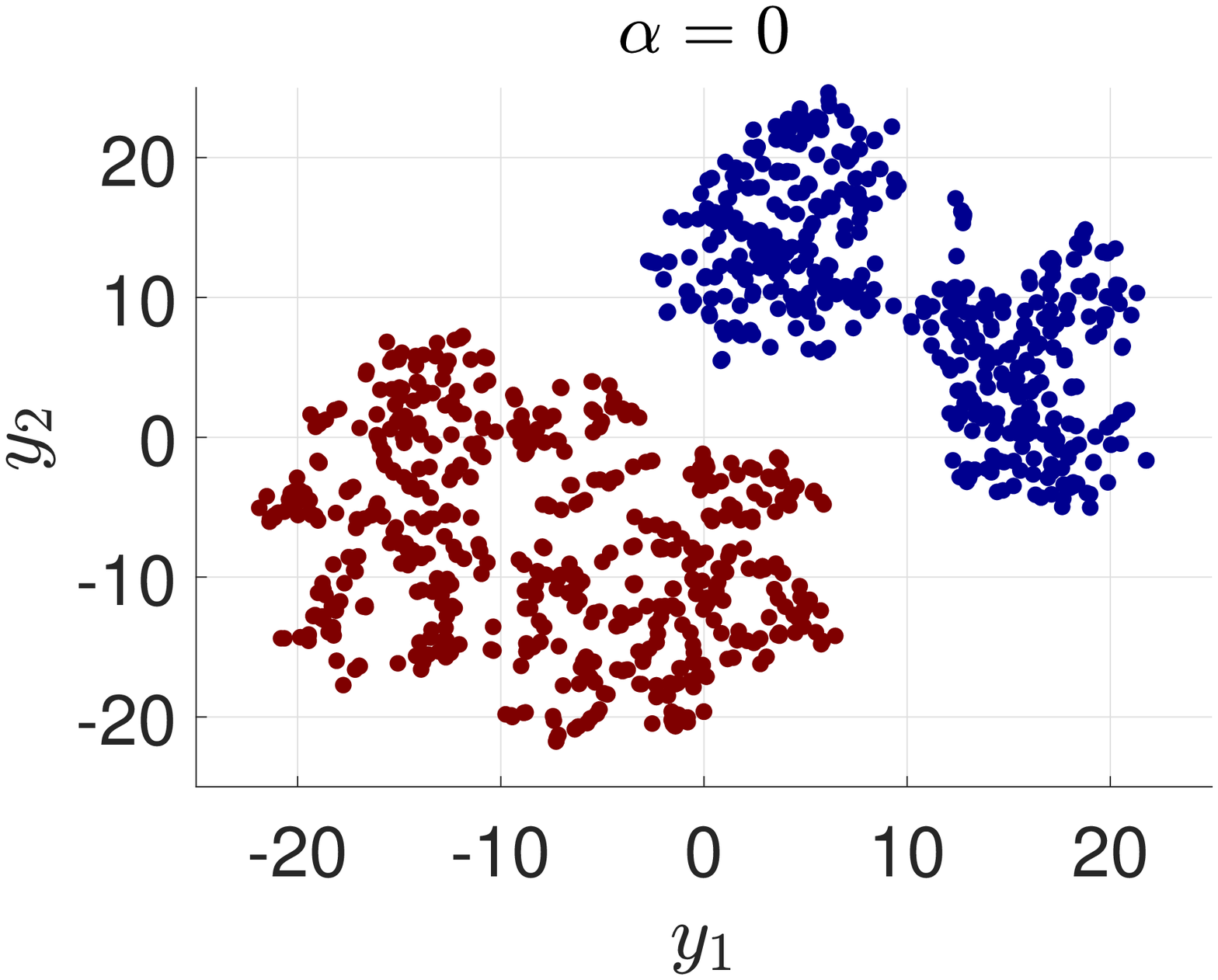}%
\label{fig.y_gaussian}}
\hfil
\subfloat[]{\includegraphics[width=.27\linewidth, height= 34mm]{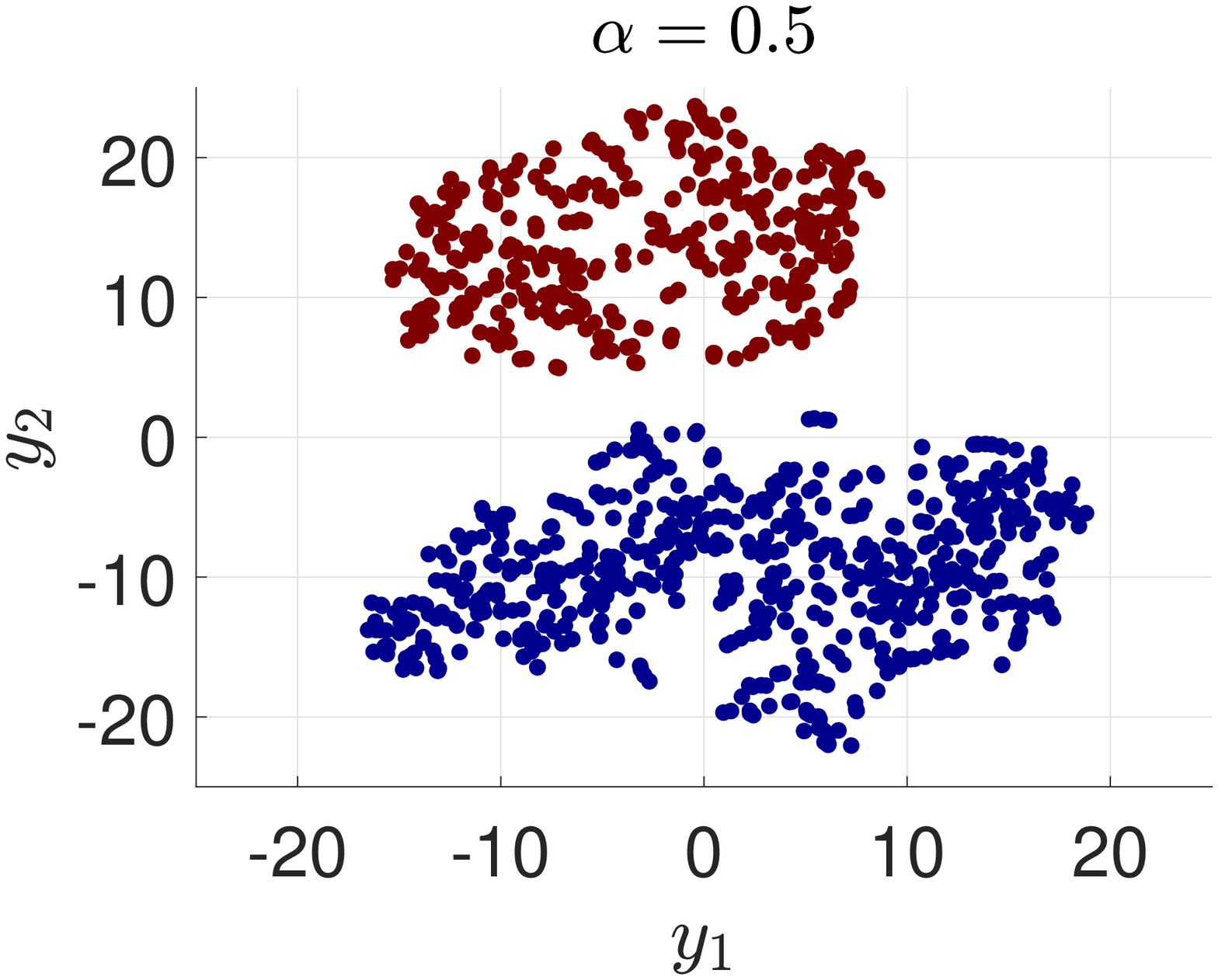}%
\label{fig.y_mix}}
\hfil
\subfloat[]{\includegraphics[width=.27\linewidth, height= 34mm]{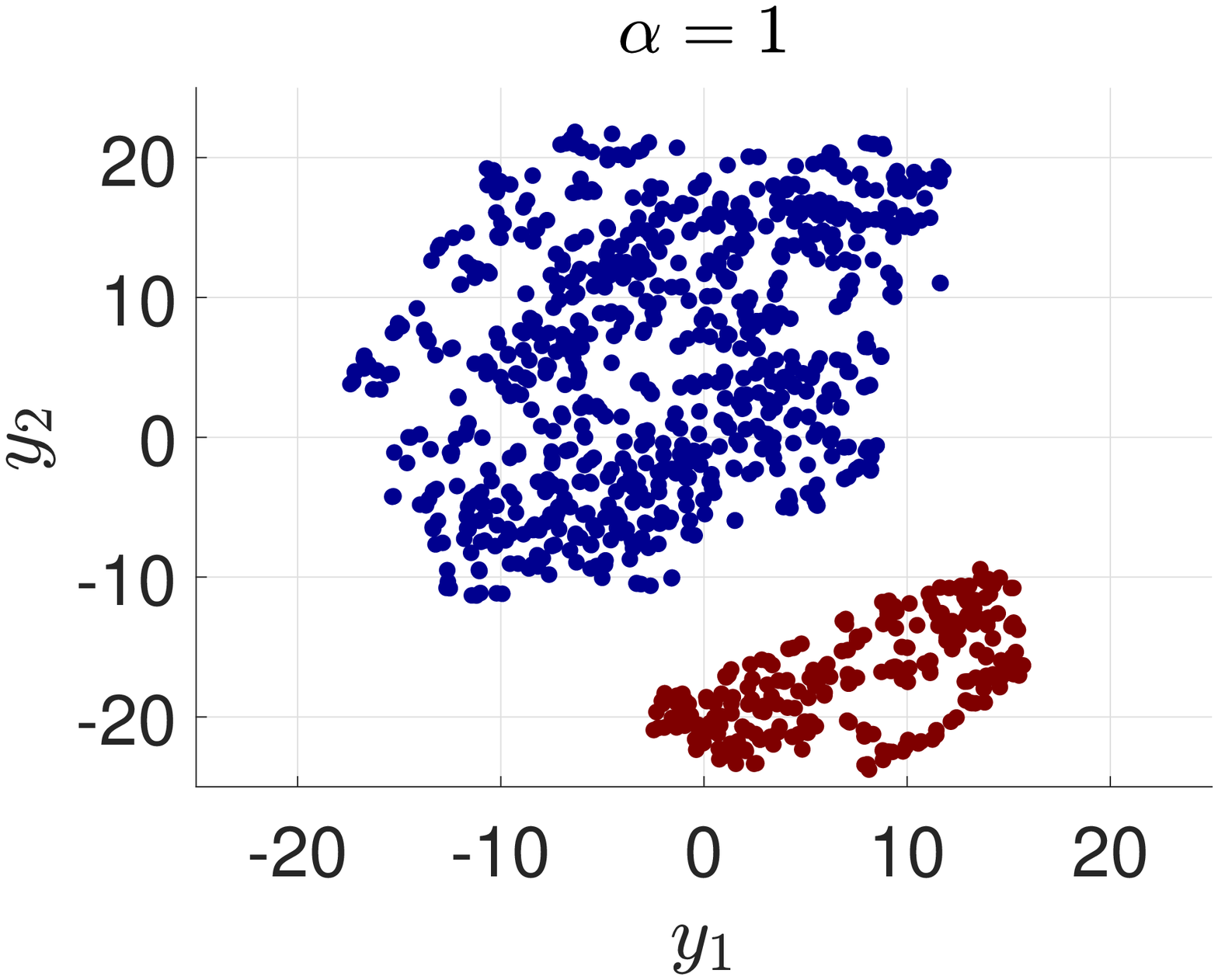}%
\label{fig.y_uni}}
\hfil
\subfloat[]{\includegraphics[width=.27\linewidth, height= 34mm]{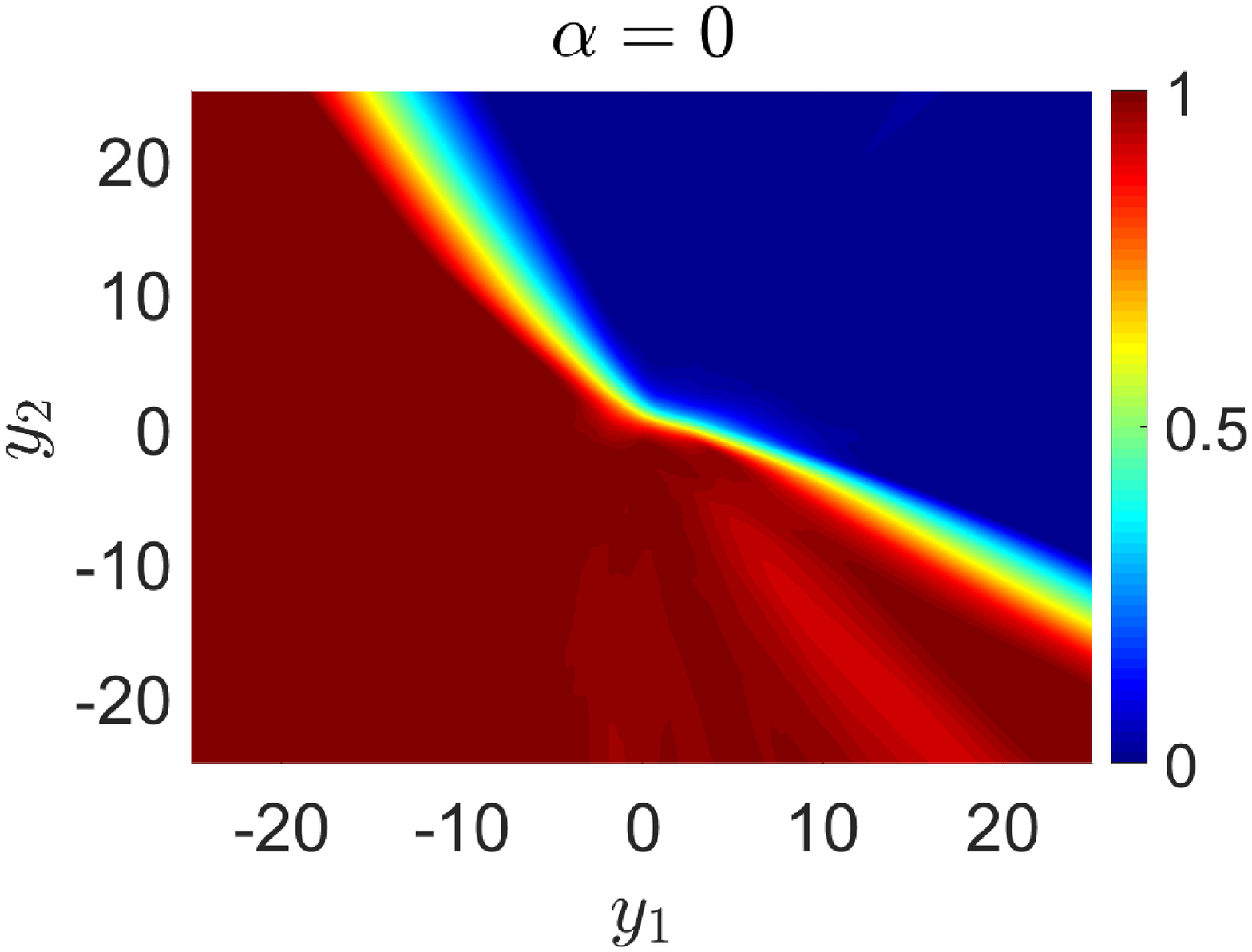}%
\label{fig.gamma_gaussian}}
\hfil
\subfloat[]{\includegraphics[width=.27\linewidth, height= 34mm]{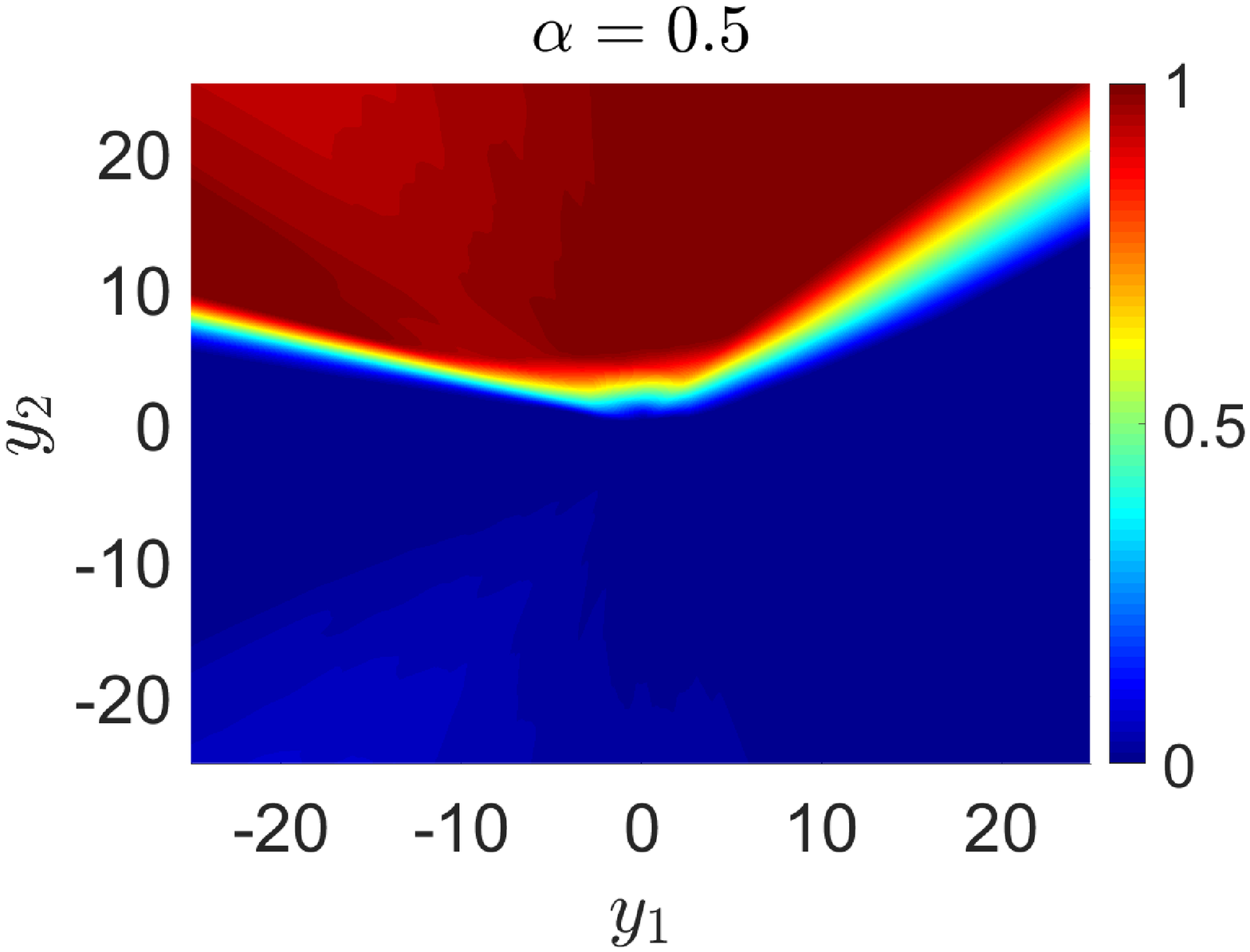}%
\label{fig.gamma_mix}}
\hfil
\subfloat[]{\includegraphics[width=.27\linewidth, height= 34mm]{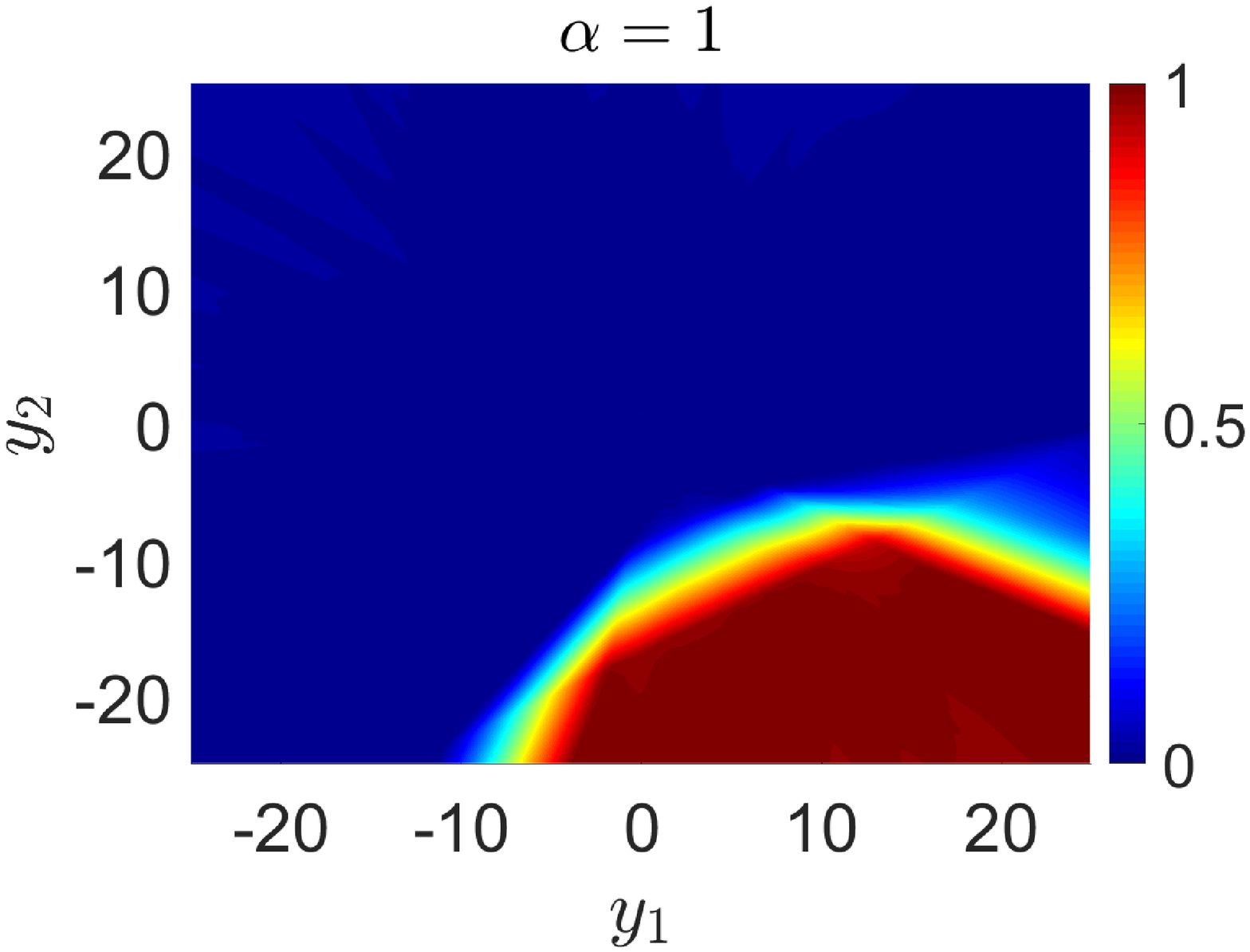}%
\label{fig.gamma_uni}}
\caption{(a)-(c) Distribution of joint angle and angular velocity of the first link in the sampled system states for $\alpha = 0$, $\alpha = 0.5$ and $\alpha = 1$, respectively. (d)-(f) The derived realization of simplified states. (g)-(i) The learned simplified safe region $\mathcal{S}_y$. The output of the safety assessment function $\Gamma(y)$ is represented by different colors. }
\label{fig.initial_estimate_results}
\end{figure*}

\subsection{Experimental Setup}

We consider a three-link inverted pendulum given as in Fig.~\ref{fig.inverted_pendulum}. 
The learning task is to find a control policy $\pi(x)$ that makes the end-effector point of the pendulum track a trajectory given as a circle in the Cartesian space with an angular velocity with respect to the centre of the circle as $\pi$ \si{\radian\per\sec}.
During the learning, we attempt to keep the system safe by preventing the first link from hitting the ground.

The system state is $6$-dimensional and consists of three joint angles and three joint angular velocities as $x = [\theta_1, \theta_2, \theta_3, \dot{\theta}_1, \dot{\theta}_2, \dot{\theta}_3 ]^T$.
The inputs $u = [u_1, u_2, u_3]^T$ are the torques applied on the three joints, where the maximal and minimal allowed torques are $u_{\mathrm{max}} = 100$ \si{\newton\metre} and $u_{\mathrm{min}} = -100$ \si{\newton\metre} for all three joints.
The lengths of the links are set to $l_1 = l_2 = l_3 = 1$ \si{\metre}.
We assume that the masses are concentrated on the centre of masses that are located at the middle point of each link.
For the nominal system, we consider the masses as $m_1 = m_2 = m_3 = 1$ \si{kg}.
The discrepancy between the nominal and the real systems is assumed to be caused by the mismatch in the masses of the first and the second links as $m_1 = m_2 = \Delta\cdot1$ \si{kg}. 
We use the Proximal Policy Optimization (PPO)~\cite{schulman2017proximal} algorithm as the learning-based controller, and the corrective controller $K(x)$ is a LQG controller that is derived from the nominal system. 
When activated, the corrective controller $K(x)$ tries to control the system back to the upright configuration.
For the SRL framework, the probability threshold is set to $p_t =0.8$.
Each learning condition is trained with three different seeds, and the averaged results are presented.
The parameters used in the PPO algorithm are given in Appendix C. 

\begin{figure*}[!t]
\sf
\small
\centering
\subfloat[]{\includegraphics[width=.27\linewidth, height= 34mm]{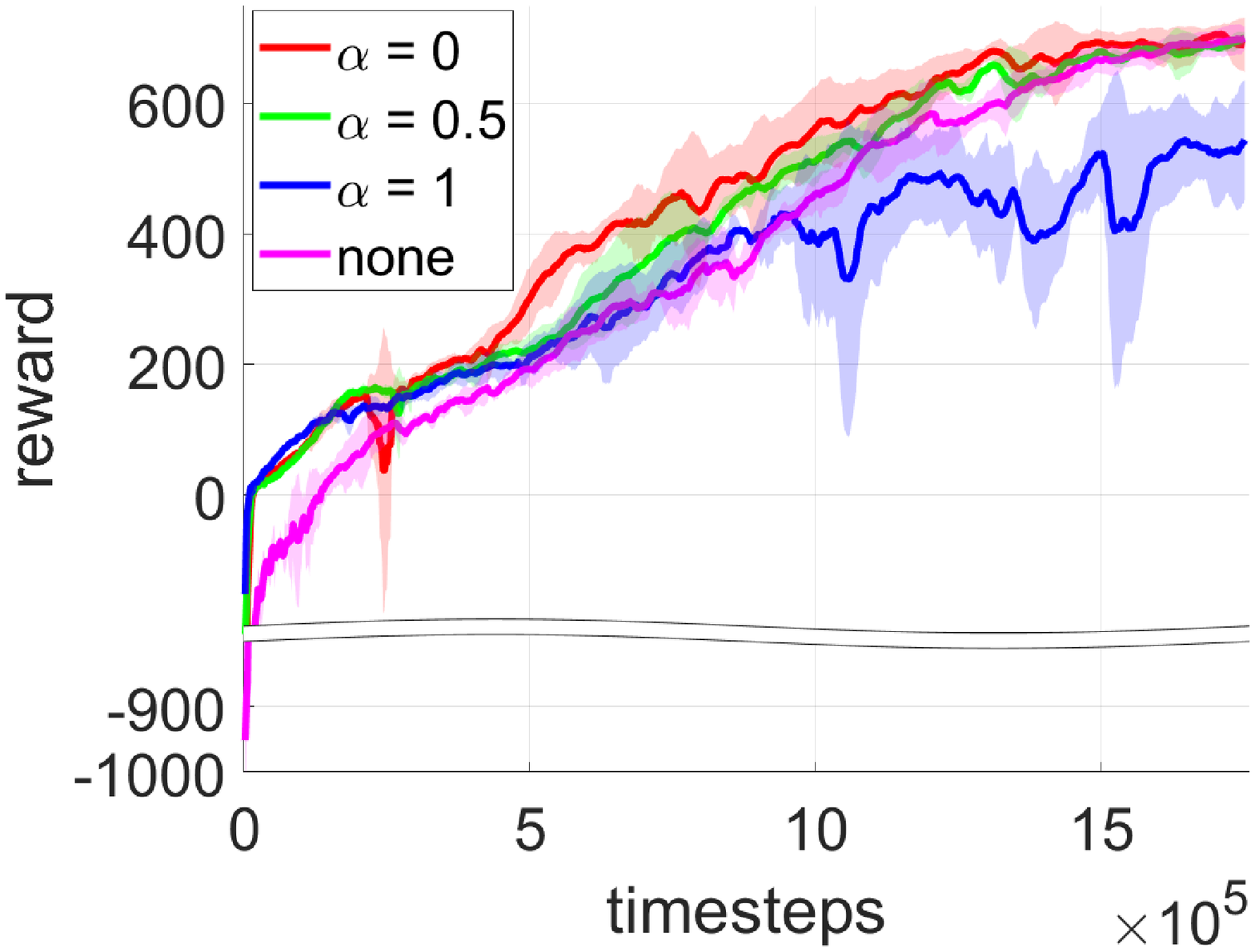}%
\label{fig.learning_01}}
\hfil
\subfloat[]{\includegraphics[width=.27\linewidth, height= 34mm]{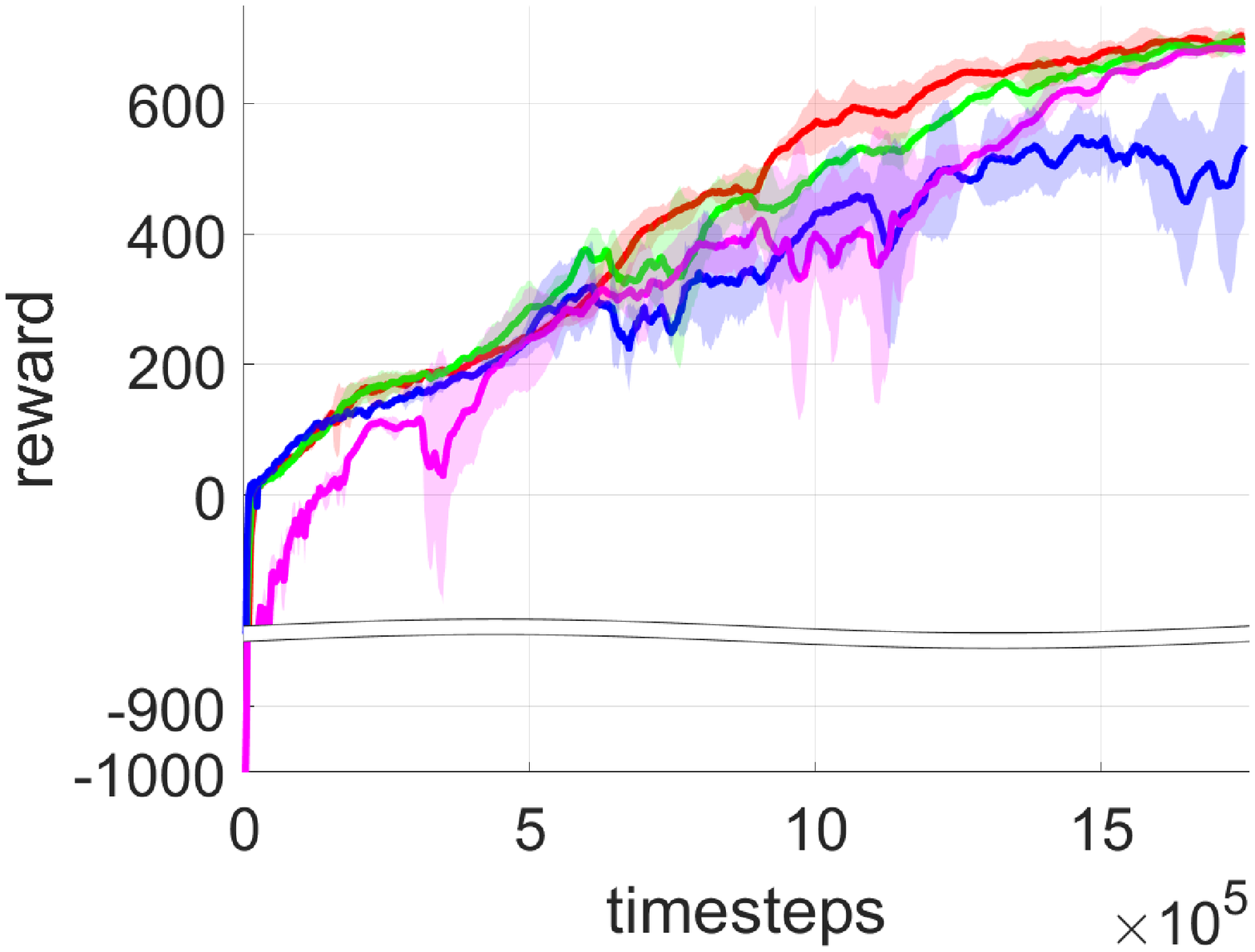}%
\label{fig.learning_05}}
\hfil
\subfloat[]{\includegraphics[width=.27\linewidth, height= 34mm]{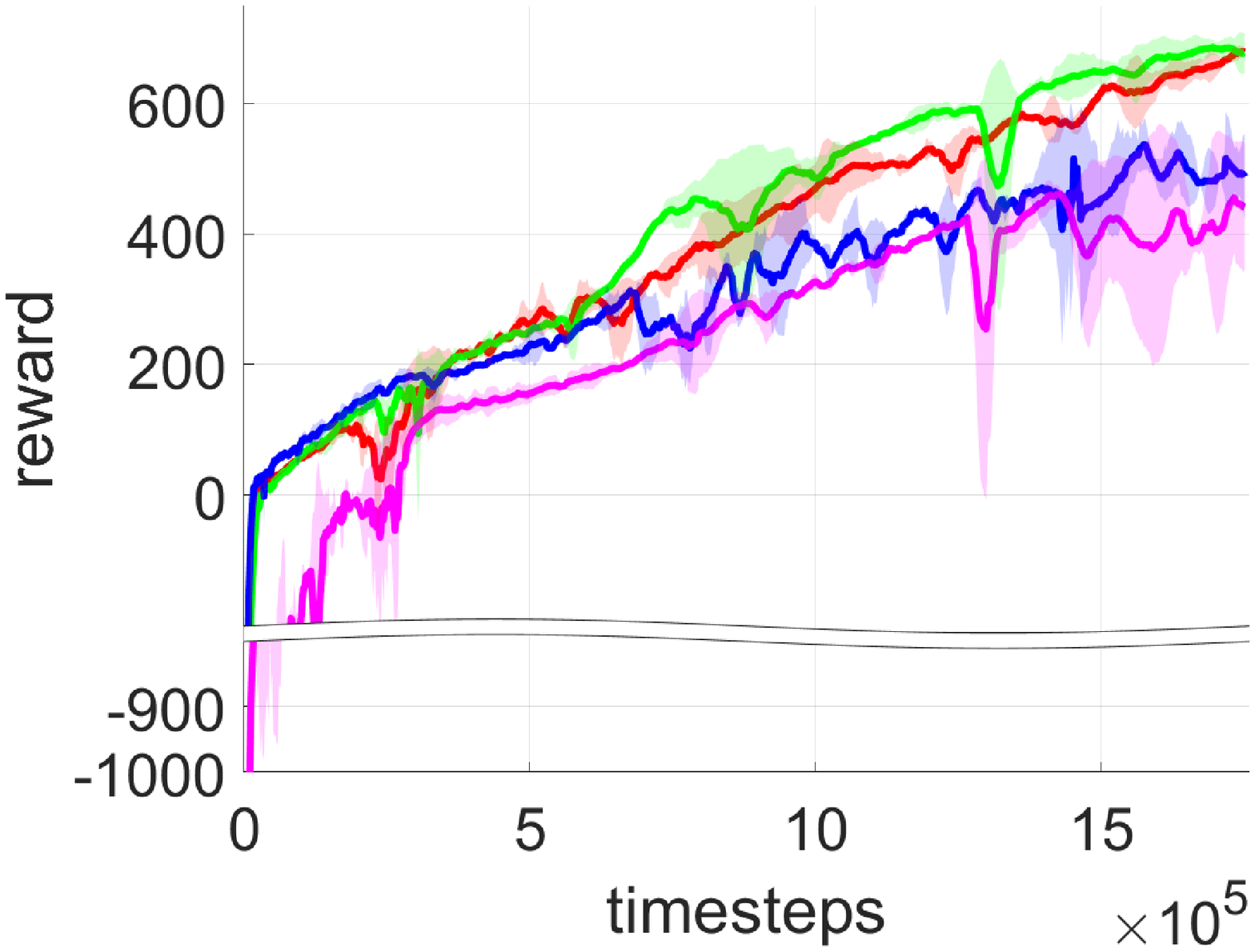}%
\label{fig.learning_4}}
\hfil
\subfloat[]{\includegraphics[width=.27\linewidth, height= 34mm]{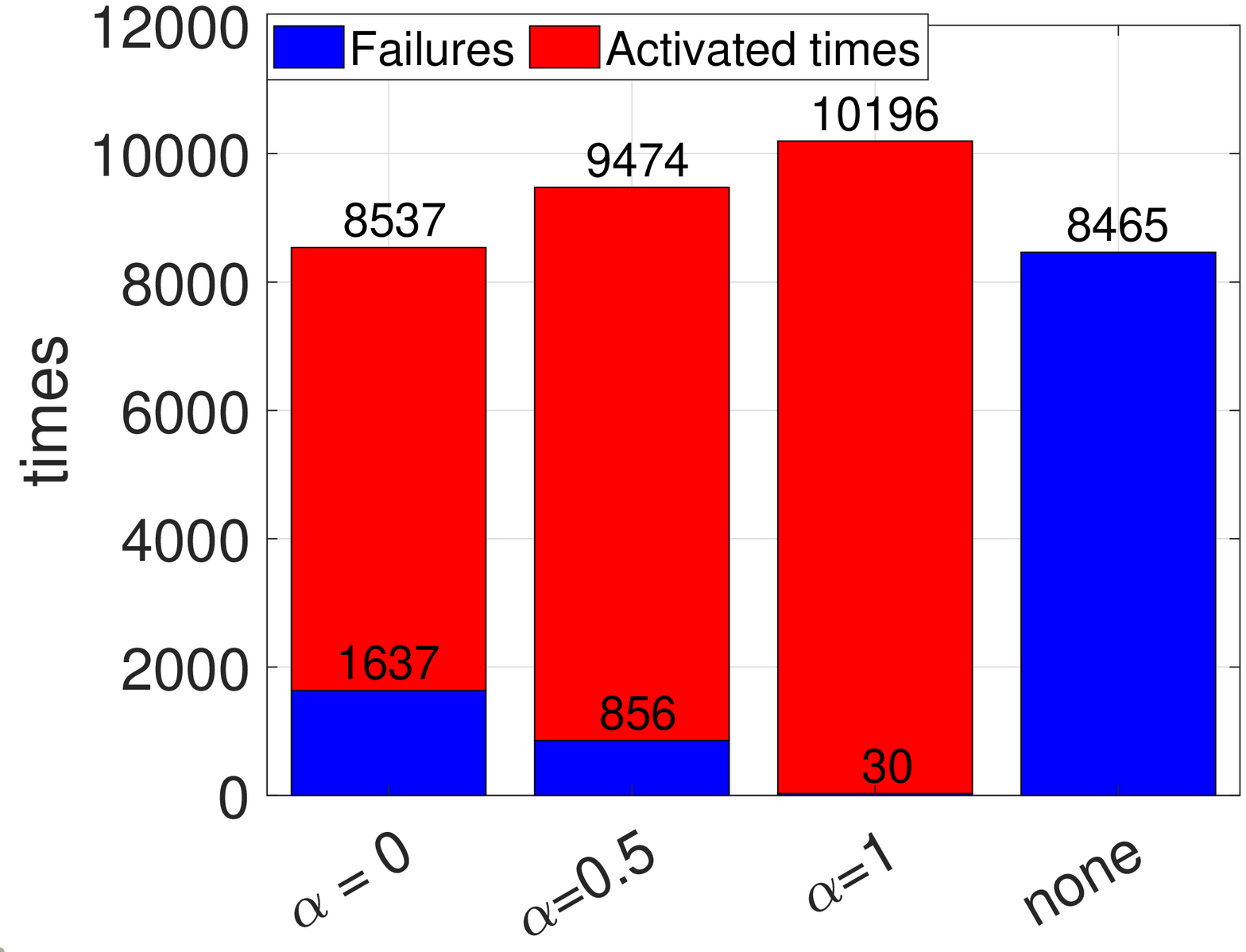}%
\label{fig.learning_failures_01}}
\hfil
\subfloat[]{\includegraphics[width=.27\linewidth, height= 34mm]{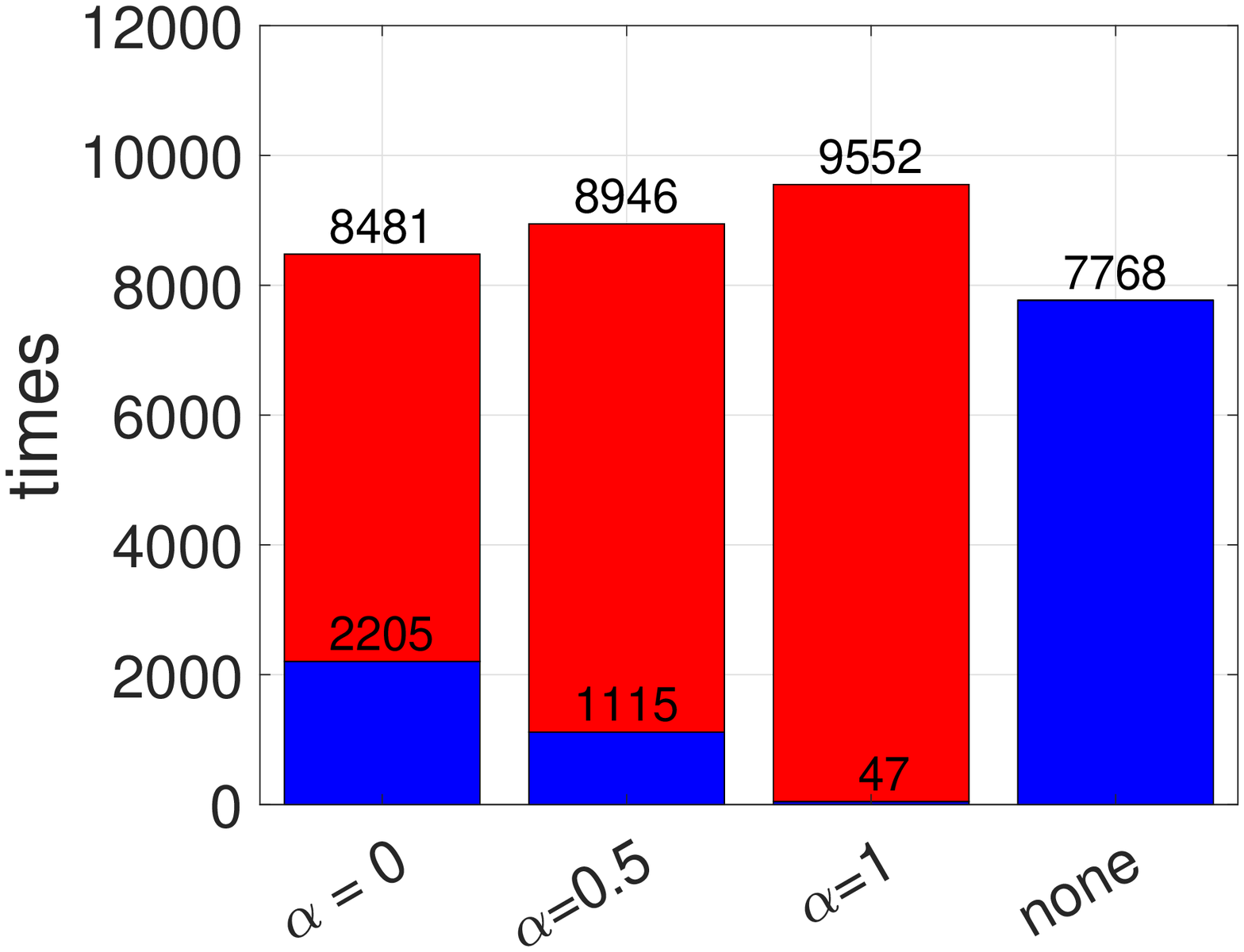}%
\label{fig.learning_failures_05}}
\hfil
\subfloat[]{\includegraphics[width=.27\linewidth, height= 34mm]{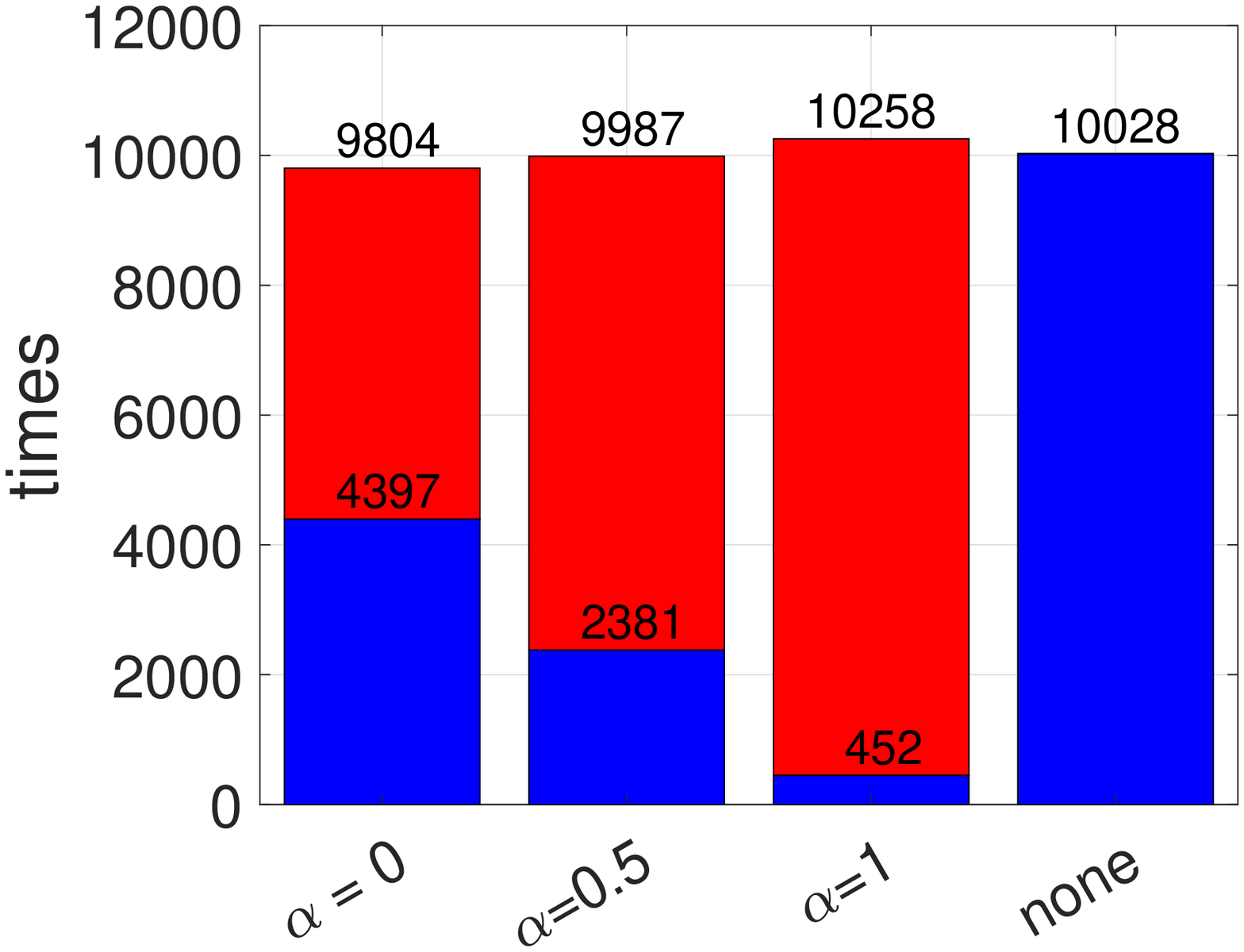}%
\label{fig.learning_failures_4}}
\caption{(a)-(c) Learning performance of the SRL framework with different simplified safe regions obtained with $\alpha =0$, $\alpha =0.5$, $\alpha = 1$ as well as the case that no supervisor is implemented (none), for $\Delta = 1.1$, $\Delta = 1.5$ and $\Delta = 4$, respectively.
(d)-(f) The total number of times that the corrective controller $K(x)$ is activated and the corresponding number of failures for $\Delta = 1.1$, $\Delta = 1.5$ and $\Delta = 4$, respectively.}
\label{fig.learning}
\end{figure*}

\subsection{Estimate of the Simplified Safe Region}
\label{sec.result_initial}

We first examine the influence of the parameter $\alpha$ on the learned simplified safe region $\mathcal{S}_y$. 
We consider the training dataset $\mathbb{D}_{\mathrm{tr}}$ with size $k=1000$ and generate it with three different values of $\alpha$: $\alpha=0$ (using only the MND $\mathcal{D}_{\mathrm{mnd}}$), $\alpha = 0.5$, and $\alpha = 1$ (using only the UD $\mathcal{D}_{\mathrm{ud}}$).
The corresponding results are presented in Fig.~\ref{fig.initial_estimate_results}.

Fig.~\ref{fig.data_gaussian}-\ref{fig.data_uniform} show the distributions of sampled system states contained in the generated training dataset $\mathbb{D}_{\mathrm{tr}}$ by displaying the joint angle and angular velocity of the first link.
As the learning starts at the origin, it is more likely to observe a system state in the neighbourhood of the origin.
Hence, the system states generated from the MND $\mathcal{D}_{\mathrm{mnd}}$ are more dense in the subregions near the origin (see Fig.~\ref{fig.data_gaussian}).
As a result, the proportion of safe data points increases, since the closer to the origin, the higher the probability that a system state can be controlled back to the upright position.
Moreover, limited by the natural dynamics, there exists no feasible control sequence to control the system to a state that simultaneously has a large positive angle and a large negative angular velocity (right-bottom of Fig.~\ref{fig.data_gaussian}), or a large negative angle and a large positive angular velocity (left-top of Fig.~\ref{fig.data_gaussian}) of the first link.
Therefore, no system states are sampled in those subregions.
For the UD $\mathcal{D}_{\mathrm{ud}}$, the sampled system states are placed among the entire state space, and as a consequence, a smaller proportion of safe data points is obtained (see Fig.~\ref{fig.data_uniform}).

Fig.~\ref{fig.y_gaussian}-\ref{fig.data_uniform} are the realizations of simplified states $\{y_1,\ldots, y_k\}$ derived by using t-SNE.
The safe and unsafe training data points are clearly separated in the simplified state space.
The corresponding learned simplified safe region $\mathcal{S}_y$ are presented in Fig.~\ref{fig.gamma_gaussian}-\ref{fig.gamma_uni}.
With less observed safe training data points, the initial estimate of the simplified safe region $\mathcal{S}_y$ becomes more conservative and tends to make an unsafe prediction.
Note that, the axes $y_1$ and $y_2$ obtained with different training datasets have different meanings and cannot be directly compared, as they are the outputs of different t-SNE computations.

\subsection{Data Generation vs. Learning Performance}

We then examine the influence of simplified safe region $\mathcal{S}_y$ learned from different training datasets $\mathbb{D}_{\mathrm{tr}}$ on the performance of the SRL framework.
We compare the performance with three different levels of discrepancy between the nominal and the real systems as $\Delta = 1.1$, $\Delta = 1.5$ and $\Delta = 4$.
Note that, since in this work we focus only on the training data generation, we use the initially learned hypothesis $h(x)$ throughout the entire learning process, such that its influence is better illustrated.
As a baseline for comparisons, we also investigate the learning performance of implementing the PPO directly without the supervisor.
The corresponding results are shown in Fig.~\ref{fig.learning}.

As illustrated in Fig.~\ref{fig.learning_01}-\ref{fig.learning_4}, for all three levels of discrepancy, the training dataset that uses only the UD, i.e., $\alpha = 1$, results in a final policy with a lower reward, since the learning process is overly restricted.
For training dataset that uses only the MND ($\alpha = 0$) or that combines the UD and the MND ($\alpha = 0.5$), the SRL framework is able to find a satisfying final policy.
It is also worth noting that, as each learning trial is terminated when an unsafe behaviour is predicted to occur, an early-stop functionality is introduced to the learning process by using the simplified safe region, which then helps with searching for an optimal policy. 
This effect becomes more significant when the system is hard to control due to heavier masses ($\Delta = 4$), where compared to the free learning case, a better final policy is found when using the SRL framework.

Fig.~\ref{fig.learning_failures_01}-\ref{fig.learning_failures_4} show the total number of times that the corrective controller $K(x)$ is activated during the entire learning process as well as the corresponding number of failures among these safety recoveries.
As a comparison, the total number of times that the safety constraint is violated during the free learning case is also given.
When the discrepancy is small, i.e., $\Delta = 1.1$, the MND $\mathcal{D}_{\mathrm{mnd}}$ results in a success rate of the corrective controller $K(x)$ ($80.8\%$) that is close to the probability threshold $p_t = 0.8$.
While increasing the value of $\alpha$ makes the learning process more conservative, it also ensures a higher probability that the system is safe, e.g. $91.0\%$ for $\alpha = 0.5$ and $99.7\%$ for $\alpha = 1$.
Similar behaviours can also be observed for $\Delta =1.5$ and $\Delta = 4$, though the success rate of the corrective controller $K(x)$ decreases according to the level of the discrepancy.
In general, the supervisor learned from the UD $\mathcal{D}_{\mathrm{ud}}$ is more robust to the mismatches due to its higher conservativeness.


\section{Conclusion}
\label{sec.conclusion}
In this work, we propose a data generation method that is able to provide representative training data for increasing the performance of a SRL framework.
The method divides the training dataset into two parts and use a multivariate normal distribution and a uniform distribution to generate the sub-datasets, respectively.
By adjusting the sizes of the two sub-datasets, a balance between finding a satisfying policy and keeping the system safe is achieved.
The proposed data generation method gives an insight about how different training data will affect the reliability of the safety estimates made via data-driven methods.
For future work, we intend to find a metric for quantifying the discrepancy between the nominal and the real systems, such that it can be used to guide the training data generation process.

\bibliographystyle{IEEEtran}
\bibliography{ref.bib}

\newpage
\appendices
\section{List of Notations}
\label{sec.appendix_notations}

\begin{table}[h!]
\begin{center}
\begin{tabular}{   p{1cm} | p{7cm}   } 
\hline
$u$ & $m$-dimensional input  \\ 
$x$ & $n$-dimensional system state \\ 
$y$ & $n_y$-dimensional simplified state \\
$z$ & safety label \\
$k$ & number of training data points in $\mathbb{D}_{\mathrm{tr}}$ \\
$p_{t}$ & probability threshold for the SRL framework \\
$d(x)$ & unknown part of the system dynamics \\
$K(x)$ & corrective controller \\
$\pi(x)$ & learning-based controller \\
$\mathcal{R}$ & RoA of the real system under the corrective controller \\
$\mathcal{S}$ & safe region of the real system \\
$\mathcal{S}_n$ & safe region of the nominal system \\
$\mathcal{S}_y$ & simplified safe region \\
$l(x)$ & labeling function of the real system given by $\mathcal{S}$ \\
$l_n(x)$ & labeling function of the nominal system given by $\mathcal{S}_n$ \\
$h(x)$ & hypothesis for predicting the safety label of real system states given by the initial estimate of $\mathcal{S}_y$ \\
$\Psi(x)$ & state mapping \\
$\Gamma(y)$ & safety assessment function \\
$\mathbb{D}_{\mathrm{tr}}$ & training dataset \\
$\mathbb{D}_{\mathrm{ud}}$ & sub-dataset generated by using $\mathcal{D}_{\mathrm{ud}}$ \\
$\mathbb{D}_{\mathrm{mnd}}$ & sub-dataset generated by using $\mathcal{D}_{\mathrm{mnd}}$\\
$\mathbb{X}$ & dataset of the observed system states for learning $\mathcal{D}_{\mathrm{mnd}}$ \\
$\mathcal{D}$ & distribution of system states of the real system \\
$\mathcal{D}_n$ & distribution of system states of the nominal system \\
$\mathcal{D}_{\mathrm{ud}}$ & uniform distribution among the system state space \\
$\mathcal{D}_{\mathrm{mnd}}$ & multivariate normal distribution of system states for approximating $\mathcal{D}$ \\ 

\hline
\end{tabular}
\end{center}
\end{table}

\section{Proof of Theorem~\ref{theo.classification_erro}}
\label{sec.appendix_proof}
Let $\epsilon_n (h, l) = \mathrm{E}_{x\sim \mathcal{D}_n} \left[ \mathbb{I} (h(x) \neq l(x)) \right]$, we have
\begin{eqnarray}
    \epsilon (h, l) &=& \epsilon (h, l) + \epsilon_n (h, l_n) - \epsilon_n (h, l_n)  \nonumber \\ 
    && + \epsilon_n (h, l) - \epsilon_n (h, l) \nonumber \\ 
    &\leq& \epsilon_n (h, l_n) + |\epsilon_n (h, l) - \epsilon_n (h, l_n)| \nonumber \\
    && + |\epsilon (h, l) - \epsilon_n (h, l)| \nonumber \\
    &\leq& \epsilon_n (h, l_n) + \mathrm{E}_{x\sim \mathcal{D}_n} \left[ \mathbb{I} (l(x) \neq l_n(x)) \right] \nonumber \\
    && + |\epsilon (h, l) - \epsilon_n (h, l)|
\label{eq.proof_first}
\end{eqnarray}

According to Lemma 3 in~\cite{ben2010theory}, the following holds for any two hypothesis $h_1$ and $h_2$
\begin{equation}
    |\epsilon (h_1, h_2) - \epsilon_n (h_1, h_2) | \leq \frac{1}{2} d_{\mathcal{H}\Delta \mathcal{H}}(\mathcal{D},\mathcal{D}_n)
\end{equation}
By considering the labeling function $l(x)$ as a hypothesis, \eqref{eq.proof_first} becomes
\begin{equation}
     \epsilon (h, l) \leq \epsilon_n (h, l_n) + \mathrm{E}_{x\sim \mathcal{D}_n} \left[ \mathbb{I} (l(x) \neq l_n(x)) \right] + \frac{1}{2}d_{\mathcal{H}\Delta \mathcal{H}}(\mathcal{D},\mathcal{D}_n)
\label{eq.proof_bound_1}
\end{equation}
If in the first line we use $\epsilon (h, l_n) = \mathrm{E}_{x\sim \mathcal{D}} \left[ \mathbb{I} (h(x) \neq l_n(x)) \right]$ instead of $\epsilon_n (h, l)$, we have
\begin{equation}
     \epsilon (h, l) \leq \epsilon_n (h, l_n) + \mathrm{E}_{x\sim \mathcal{D}} \left[ \mathbb{I} (l(x) \neq l_n(x)) \right] + \frac{1}{2}d_{\mathcal{H}\Delta \mathcal{H}}(\mathcal{D},\mathcal{D}_n)
     \label{eq.proof_bound_2}
\end{equation}
Combining~\eqref{eq.proof_bound_1} and~\eqref{eq.proof_bound_2} hence gives the Theorem~\ref{theo.classification_erro}.

\section{Experimental Setup}
\label{sec.appendix_experiment}

For generating the training dataset $\mathbb{D}_{\mathrm{tr}}$, the sample ranges of each state variable used in the UD are chosen as: $-\frac{\pi}{2}\si{\radian} < \theta_1 < \frac{\pi}{2}\si{\radian}$, $-\pi \si{\radian}\leq \theta_2,\theta_3 \leq \pi \si{\radian}$, $-10 \si{\radian\per\sec} \leq \dot{\theta_1} \leq 10 \si{\radian\per\sec}$, $-20 \si{\radian\per\sec} \leq \dot{\theta_2},\dot{\theta_3} \leq 20 \si{\radian\per\sec}$. 
The following reward function is used for the PPO algorithm
\begin{equation}
    R(t) = R_c - 10 \cdot ||p_e(t)-p_d(t)||
\end{equation}
where $R_c = 2$ is a constant reward for being safe, $||p_e(t)-p_d(t)||$ is the Euclidean distance between the current end-effector position $p_e$ and the desired position $p_d$ given by the trajectory of the target circle.
Following parameters are used for the PPO: the number of steps per update is $2048$, the value function loss coefficient is $1$, the gradient norm clipping coefficient is $10$, the learning rate is $1e^{-4}$, the number of training epochs per update is $10$, the number of training minibatches per update is $128$, the discounting factor is $0.99$, the advantage estimation discounting factor is $0.95$, the policy entropy coefficient is $0$, the number of hidden layers of the neural network is $2$ with $128$ neurons in each layer.

\end{document}